\theoremstyle{definition}
\newtheorem{defi}{Definition}
\theoremstyle{plain}
\newtheorem{theo}{Theorem}
\newtheorem{lemm}[theo]{Lemma}
\begin{document}
\title{Independent Distributions on a Multi-Branching AND-OR Tree of Height 2
}
\author{Mika Shigemizu${}^{1}$, Toshio Suzuki${}^{2}$\thanks{Corresponding author. This  work  was  partially  supported  by  Japan  Society  for  the Promotion of Science (JSPS) KAKENHI (C) 16K05255.} and Koki Usami${}^{3}$
\\
Department of Mathematical Sciences, %\\
Tokyo Metropolitan University, \\ 
Minami-Ohsawa, Hachioji, Tokyo 192-0397, Japan\\
1: mksgmn$\_$dr5@yahoo.co.jp
\quad
2: toshio-suzuki@tmu.ac.jp
\\
3: cocokidream@yahoo.co.jp
} 

\date{\today}

\maketitle              % typeset the title of the contribution

\begin{abstract}
We investigate an AND-OR tree $T$ and a probability distribution $d$ 
on the truth assignments to the leaves. 
Tarsi (1983) showed that if $d$ is an independent and identical distribution (IID) 
such that probability of a leaf having value 0 is neither 0 nor 1 
then, under a certain assumptions, there exists an optimal 
algorithm that is depth-first. 
We investigate the case where $d$ is an independent distribution (ID) 
and probability depends on each leaf. 
It is known that in this general case, if height is greater than or equal to 3, 
Tarsi-type result does not hold. 
It is also known that for a complete binary tree of height 2, Tarsi-type result 
certainly holds. 
In this paper, we ask whether Tarsi-type result holds for an AND-OR tree of 
height 2. Here, a child node of the root is either an OR-gate or a leaf: 
The number of child nodes of an internal node is arbitrary, 
and depends on an internal node. 
We give an affirmative answer. 
Our strategy of the proof is to reduce the problem to the case of 
directional algorithms. We perform induction on the number of leaves, 
and modify Tarsi's method to suite height 2 trees. 
We discuss why our proof does not apply to height 3 trees.

\vspace{\baselineskip}

Keywords:
Depth-first algorithm; Independent distribution; Multi-branching tree; Computational complexity; Analysis of algorithms

MSC[2010] 68T20; 68W40
\end{abstract}

%%%%%%%%%%%%%%%%%%%%%%%%%%%%%%%%%%%%%%%%%%%%%%%%%%%%%%%%%%%%%%%%%%%%%
\section{Introduction}
%%%%%%%%%%%%%%%%%%%%%%%%%%%%%%%%%%%%%%%%%%%%%%%%%%%%%%%%%%%%%%%%%%%%%

Depth-first algorithm is a well-known type of tree search algorithm. 
Algorithm $A$ on a tree $T$ is \emph{depth-first} if the following holds for each internal node $x$ of $T$: 
Once $A$ probes a leaf that is a descendant of $x$, $A$ does not probe leaves 
that are not descendant of $x$ until $A$ finds value of $x$. 

With respect to analysis of algorithm, the concept of depth-first algorithm has an advantage that it is well-suited for induction on subtrees. An example of such type of induction may be found in our former paper \cite{SN15}. 

Thus, given a problem on a tree, it is theoretically interesting question 
to ask whether there exists an optimal algorithm that is depth-first. 
Here, cost denotes (expected value of) the number of leaves probed during 
computation, and an algorithm is optimal if it achieves the minimum cost 
among algorithms considered in the question. 

If an associated evaluation function of a mini-max tree is bi-valued 
and the label of the root is MIN (MAX, respectively), the tree is equivalent to 
an AND-OR tree (an OR-AND tree). 
In other words, the root is labeled by AND (OR), and 
AND layers and OR layers alternate. 
Each leaf has a truth value 0 or 1, where we identify 0 with false, 
and 1 with true. 

A fundamental result on optimal algorithms on 
an AND-OR trees is given by Tarsi. 
A tree is \emph{balanced} (in the sense of Tarsi) if 
(1) any two internal nodes of the same depth (distance from the root) 
have the same number of child nodes, and 
(2) all of the leaves have the same depth.
A probability distribution $d$ on the truth assignments to the leaves 
is an \emph{independent distribution} (ID) 
if the probability of each leaf having value 0 depends on 
the leaf, and values of leaves are determined independently. 
If, in addition, all the probabilities of the leaves are the same 
(say, $p$), 
$d$ is an \emph{independent and identical distribution} (IID). 
Algorithm $A$ is \emph{directional} \cite{Pe80} if there is a fixed linear order 
of the leaves and for any truth assignment, the order of probing by $A$ 
is consistent with the order. 

The result of Tarsi is as follows. 
Suppose that AND-OR tree $T$ is balanced and $d$ is an IID such that 
$p\ne 0, 1$. Then there exists an optimal algorithm that is depth-first 
and directional \cite{Ta83}. 

This was shown by an elegant induction. 
For an integer $n$ such that $0 \leq n \leq h = $ (the height of the tree), 
an algorithm $A$ is called \emph{$n$-straight} if for each node $x$ 
whose distance from the leaf is at most $n$, 
once $A$ probes a leaf that is a descendant of $x$, 
 $A$ does not probe leaves 
that are not descendant of $x$ until $A$ finds value of $x$. 
The proof by Tarsi is induction on $n$. 
Under an induction hypothesis, we take an optimal algorithm $A$ that is 
$(n-1)$-straight but not $n$-straight. 
Then we modify $A$ and get two new algorithms. 
By the assumption, 
the expected cost by $A$ is not greater than the costs by the modified 
algorithms, thus we get inequalities. By means of the inequalities, 
we can eliminate a non-$n$-straight move from $A$ 
without increasing cost. 

In the above mentioned proof by Tarsi, derivation of the inequalities 
heavily depends on the hypothesis that the distribution is an IID. 
In the same paper, Tarsi gave an example of an ID on an AND-OR 
tree of the following properties: 
The tree is of height 4, not balanced, and no optimal algorithm is 
depth-first. Later, we gave another example of such an ID where a tree is 
balanced, binary and height 3 \cite{Su17b}. 

On the other hand, as is observed in \cite{Su17b}, 
in the case of a balanced binary AND-OR tree of height 2, 
Tarsi-type result holds for IDs in place of IIDs. 

\begin{table}[ht]
\begin{center}
\begin{tabular}{|p{.25\textwidth}||p{.28\textwidth}|p{.28\textwidth}|}
\hline
 & ID & IID \\
\hline
\hline
height 2, binary & Yes. S. \cite{Su17b} & \\ \cline{1-2}
height 2, general &  & \\ \cline{1-2}
height $\geq$ 3 & No. S. \cite{Su17b} & Yes. Tarsi \cite{Ta83} \\
 & (see also Tarsi \cite{Ta83}) & \\ 
\hline
\end{tabular}
\end{center}
\caption{Existence of an optimal algorithm that is depth-first}
\label{table:1}
\end{table}%

Table~\ref{table:1} summarizes whether Tarsi-type result holds or not. 
In the table, we assume that an AND-OR tree is balanced, and that 
the probability of each leaf having value 0 is neither 0 nor 1. 

In this paper, we ask whether Tarsi-type result holds for 
the case where a tree is height 2 and the number of child nodes is 
arbitrary. We give an affirmative answer. 

We show a slightly stronger result. 
We are going to investigate a tree of the following properties. 
The root is an AND-gate, 
and a child node of the root is either an OR-gate or a leaf. 
The number of child nodes of an internal node is arbitrary, 
and depends on an internal node. 
Figure~\ref{fig:orandtreeh2n1kai} is an example of such a tree. 
Now, suppose that an ID on the tree is given and that at each leaf, 
the probability of having value 0 is neither 0 nor 1. 
Under these mild assumptions, we show that there exists an optimal 
algorithm that is depth-first and directional. 

%%%%%
\begin{figure}[hb] 
\centering
\includegraphics[width=.75\textwidth]{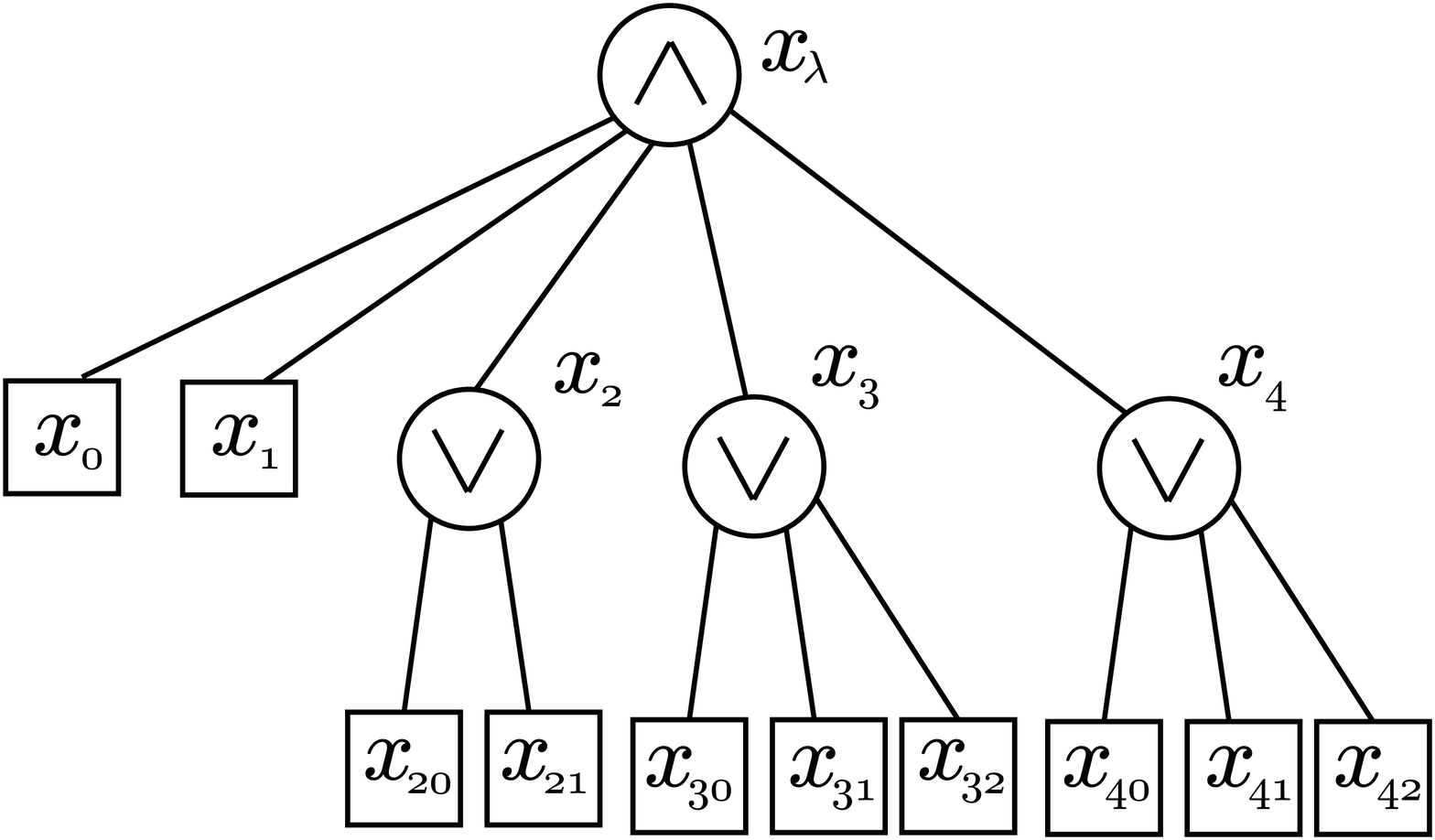}
\caption{Example of a height 2 AND-OR tree that is not balanced}
\label{fig:orandtreeh2n1kai}
\end{figure}
%%%%%

Our strategy of the proof is to reduce the problem to the case of directional algorithms. 
We perform induction on the number of leaves, and modify Tarsi's method to go along with properties particular to hight 2 trees. 
The first author and the third author showed, in their talk \cite{SU18}, 
a restricted version of the present result in which only directional (possibly non-depth-first) algorithms are taken into consideration. 
In the talk, the core of the strategy is suggested by the first author. 
The second author reduced the general case, in which  non-directional algorithms are taken into consideration, to the case of directional algorithms. 

The paper \cite{Su17a} gives an exposition of the background. 
It is a short survey on the work of Liu and Tanaka 
\cite{LT07} and its subsequent developments \cite{SN12,SN15,POLT16}. 
The paper \cite{PPNTY17} is also in this line. 
Some classical important results by 1980's may be 
found in the papers \cite{KM75,Pe80,Pe82,Ta83} and \cite{SW86}. 

We introduce notation in section~\ref{section:preliminaries}. 
We show our result in section~\ref{section:results}. 
In section~\ref{section:summary}, we discuss why our proof 
does not apply to the case of height 3, and discuss future directions.  

%%%%%%%%%%%%%%%%%%%%%%%%%%%%%%%%%%%%%%%%%%%%%%%%%%%%%%%%%%%%%%%%%%%%%
\section{Preliminaries} \label{section:preliminaries}
%%%%%%%%%%%%%%%%%%%%%%%%%%%%%%%%%%%%%%%%%%%%%%%%%%%%%%%%%%%%%%%%%%%%%

If $T$ is a balanced tree (in the sense of Tarsi, see Introduction) 
and there exists a positive integer $n$ such that all of the internal nodes 
have exactly $n$ child nodes, we say $T$ is a \emph{complete $n$-ary tree}. 

For algorithm $A$ and distribution $d$, we denote (expected value of) 
the cost by $\mathrm{cost} (A,d)$. 

We are interested in a multi-branching AND-OR tree of height 2, 
where a child node of the root is either a leaf or an OR-gate, 
and the number of leaves depend on each OR-gate. 
For simplicity of notation, we investigate a slightly larger class 
of trees. 

Hereafter, by ``a multi-branching AND-OR tree of height at most 2'', 
we denote an AND-OR tree $T$ of the following properties. 

\begin{itemize}
\item The root is an AND-gate. 
\item We allow an internal node to have only one child, 
provided that the tree has at least two leaves. 
\item All of the child nodes of the root are OR-gates. 
\end{itemize}

The concept of ``multi-branching AND-OR tree of height at most 2'' include 
the multi-branching AND-OR trees of height 2 in the original sense 
(because an OR-gate of one leaf is equivalent to a leaf), 
the AND-trees of height 1 
(this case is achieved when all of the OR-gates have one leaf) 
and the OR-trees of height 1 
(this case is achieved when the root has one child). 
The simplest case is a tree of just two leaves, 
and this case is achieved exactly in either of the following two. 
(1) The tree is equivalent to a binary AND-tree of height 1: 
(2) The tree is equivalent to a binary OR-tree of height 1. 

Suppose that $T$ is a multi-branching AND-OR tree of height at most 2. 

\begin{itemize}
\item We let $x_{\lambda}$ denote the root. 
\item By $r$ we denote the number of child nodes of the root. 
$x_{0}, \dots, x_{r-1}$ are the child nodes of the root. 
\item For each $i$ ($0 \leq i < r$), we let $a(i)$ denote 
the number of child leaves of $x_{i}$. 
$x_{i,0}, \dots, x_{i,a(i)-1}$ are the child leaves of $x_{i}$.
\end{itemize}

Figure~\ref{fig:orandtreeh2n1} is an example of such a tree, 
where $r=5$, $a(0)=1$ and $a(4)=3$. 
The tree is equivalent to the tree in Figure~\ref{fig:orandtreeh2n1kai}. 

%%%%%
\begin{figure}[H] 
\centering
\includegraphics[width=.75\textwidth]{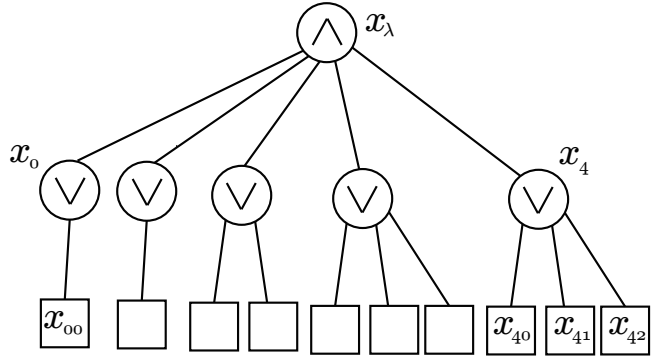}
\caption{Example of an AND-OR tree of height at most 2}
\label{fig:orandtreeh2n1}
\end{figure}
%%%%%

Suppose that $d$ is an ID on $T$. 
For each $i$ ($0 \leq i < r$) and $j$ ($0 \leq j < a(i)$), 
we use the following symbols. 

\begin{itemize}
\item $p(i,j)$ is the probability of $x_{i,j}$ having value 0. 
\item $p(i)$ is the probability of $x_{i}$ having value 0.
\item Since $d$ is an ID, its restriction to the child nodes of 
$x_{i}$ is an ID on the subtree whose root is $x_{i}$. 
Here, we denote it by the same symbol $d$. 
Then we define $c(i)$ as follows. 
\[
c(i) = \min_{A} \mathrm{cost}(A,d),
\] 
\noindent
where $A$ runs over algorithms finding value of $x_{i}$, 
and $\mathrm{cost}(A,d)$ is expected cost. 
\end{itemize}

Thus, $p(i)$ is the product $p(i,0)\cdots p(i,a(i)-1)$. 
If $a(i)=1$ then $c(i)=1$. 
If $a(i)\geq 2$ and we have $p(i,0) \leq \dots \leq p(i,a(i)-1)$, then 
$c(i) =1+p(i,0)+p(i,0)p(i,1)+\dots +p(i,0)\cdots p(i,a(i)-1)$. 

Tarsi \cite{Ta83} investigated a depth-first algorithm that probes the leaves 
from left to right, skipping a leaf whenever there is sufficient information 
to evaluate one of its ancestors, and he called it $\mathrm{SOLVE}$. 
We investigate a similar algorithm depending on a given independent 
distribution. 

%%%%%
\begin{defi} 
Suppose that $T$ is a multi-branching AND-OR tree of height at most 2. 

\begin{enumerate}
\item 
Suppose that $d$ is an ID on $T$ and for each $i$ ($0 \leq i < r$) 
and $j$ ($0 \leq j < a(i)$), we have $p(i,j) \neq 0,1$. 

By $\mathrm{SOLVE}_{d}$, we denote the unique depth-first directional 
algorithm such that the following hold for all $i,s, j, k$ 
($0 \leq i < r$, $0 \leq s < r$, $0 \leq j < a(i)$, $0 \leq k < a(i)$). 
 \begin{enumerate}
 \item If $c(i)/p(i) < c(s)/p(s)$ then priority of (probing the leaves under) $x_{i}$ 
is higher than that of $x_{s}$. 
 \item If $c(i)/p(i) = c(s)/p(s)$ and $i<s$ then 
priority of $x_{i}$ is higher than that of $x_{s}$. 
 \item If $p(i,j) < p(i,k)$ then priority of (probing) $x_{i,j}$ is higher than $x_{i,k}$. 
 \item If $p(i,j) = p(i,k)$ and $j<k$ then priority of $x_{i,j}$ is higher than $x_{i,k}$. 
 \end{enumerate}
\item 
Suppose that we remove some nodes (except for the root of $T$) from $T$, and if a removed node has descendants, we remove them too. Let $T^{\prime}$ be the resulting tree. 
Suppose that $\delta$ is an ID on $T^{\prime}$ and for each $i$ ($0 \leq i < r$) 
and $j$ ($0 \leq j < a(i)$) such that $x(i,j)$ is a leaf of $T^{\prime}$, 
we have $p(i,j) \neq 0,1$. 

By $\mathrm{SOLVE}^{T}_{\delta}$, we denote the unique depth-first directional 
algorithm of the following properties. 
For all $i,s, j, k$ 
($0 \leq i < r$, $0 \leq s < r$, $0 \leq j < a(i)$, $0 \leq k < a(i)$), 
if $x_{i}$ and $x_{s}$ are nodes of $T^{\prime}$, the above-mentioned assertions (a) and (b) hold; 
and if $x_{i,j}$ and $x_{i,k}$ are leaves of $T^{\prime}$, the above-mentioned assertions (c) and (d) hold. 
\end{enumerate}
\end{defi}
%%%%%

%%%%%
\begin{lemm} \label{lemm:1}
Suppose that $T$ is a multi-branching AND-OR tree of height at most 2. 
Suppose that $d$ is an ID on $T$ and for each $i$ $(0 \leq i < r)$ 
and $j$ $(0 \leq j < a(i) )$, we have $p(i,j) \neq 0,1$. 
Then $\mathrm{SOLVE}_{d}$ achieves the minimum cost among 
depth-first directional algorithms. 
To be more precise, if $A$ is a depth-first directional algorithm 
then $\mathrm{cost} (\mathrm{SOLVE}_{d},d) \leq \mathrm{cost}(A,d)$. 
\end{lemm}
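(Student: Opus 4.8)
The plan is to prove Lemma~\ref{lemm:1} by analysing the structure of an arbitrary depth-first directional algorithm $A$ and showing that any deviation from the priority ordering used by $\mathrm{SOLVE}_{d}$ can be removed without increasing the cost. Because $A$ is depth-first, $A$ processes the child OR-gates $x_0,\dots,x_{r-1}$ of the root in some fixed order $\sigma$, completely evaluating one before moving to the next (up to the point where the root's value becomes known and $A$ stops), and within each $x_i$ it probes the leaves $x_{i,0},\dots,x_{i,a(i)-1}$ in some fixed order $\tau_i$. So the cost of $A$ decomposes cleanly: once we condition on which OR-gates have value $0$, the root is an AND-gate, so $A$ keeps going until it hits the first $x_i$ (in $A$'s order) that evaluates to $0$, or exhausts all of them. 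The key quantities are: the expected cost $c_A(i)$ that $A$ spends inside $x_i$ to determine $x_i$'s value, and the probability $p(i)$ that $x_i$ evaluates to $0$ (which is independent of the order $\tau_i$, being the product of the $p(i,j)$'s).

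First I would handle the \emph{inner} ordering: fix the OR-gate $x_i$ and show that, among all orders $\tau_i$ of its leaves, the expected number of probes to evaluate $x_i$ is minimised exactly when the leaves are sorted by increasing $p(i,j)$, i.e.\ the order used by $\mathrm{SOLVE}_d$. This is the standard OR-gate exchange argument: an OR-gate is evaluated by probing leaves until one returns $1$; probing $x_{i,j}$ before $x_{i,k}$ contributes to expected cost a term that, after an adjacent transposition, changes by a quantity whose sign is governed by the comparison of $p(i,j)$ and $p(i,k)$ (the leaf more likely to be $1$, i.e.\ smaller $p$, should go first). A finite sequence of such adjacent swaps, none increasing cost, brings $\tau_i$ to the $\mathrm{SOLVE}_d$ order; this simultaneously shows the minimum value of $c_A(i)$ equals $c(i)$ as defined in the Preliminaries.

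Next I would handle the \emph{outer} ordering, now assuming each $x_i$ is probed internally in the optimal order so $c_A(i)=c(i)$. The root is an AND-gate probed left to right in $A$'s order $\sigma$; $A$ pays $c(i)$ for $x_i$ and then continues to the next OR-gate only if $x_i$ returned $1$, which happens with probability $1-p(i)$. Writing the total expected cost as a sum over $A$'s order and performing an adjacent transposition of two consecutive OR-gates $x_i$ (before) and $x_s$ (after) shows the cost difference has the sign of $c(i)p(s) - c(s)p(i)$, equivalently of $c(i)/p(i) - c(s)/p(i)p(s)$ scaled appropriately; careful bookkeeping gives that the ratio $c(i)/p(i)$ is the correct sorting key (smaller ratio first), with ties broken by index, matching clauses (a) and (b). Here I must be slightly careful that $p(i)\neq 0$ (true since each $p(i,j)\neq 0$) so dividing by $p(i)$ is legitimate, and that $p(i)\neq 1$ is not actually needed for this step but guarantees the ratios are well-behaved. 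Again, a finite sequence of adjacent transpositions, each non-increasing, transforms $\sigma$ into the $\mathrm{SOLVE}_d$ order.

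The main obstacle I anticipate is the outer exchange computation: unlike the inner OR-gate, the "stopping" event for the AND-gate depends on the full prefix of OR-gates already processed, so I must argue that an adjacent transposition of $x_i$ and $x_s$ leaves the contribution of all other OR-gates unchanged — this works because conditioning on the values of the OR-gates before the transposed pair, the remaining cost factorises, by independence, into (prefix probability of reaching the pair) $\times$ (contribution of the pair) $\times$ (unchanged tail). Making this factorisation rigorous, and correctly tracking the boundary case where the transposition involves the last OR-gate $A$ would ever reach, is where the care is needed. Once the exchange inequality is established in the form "swapping an out-of-order adjacent pair does not increase cost," the standard bubble-sort argument finishes the proof, and the final inequality $\mathrm{cost}(\mathrm{SOLVE}_d,d)\leq\mathrm{cost}(A,d)$ follows.
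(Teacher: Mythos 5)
Your proposal is correct: the two-level exchange (sort leaves within each OR-gate by increasing $p(i,j)$, then sort the OR-gates by increasing $c(i)/p(i)$, each via cost-non-increasing adjacent transpositions) is exactly the standard argument the paper leaves implicit, since its proof of Lemma~\ref{lemm:1} is just ``It is straightforward.'' Only a cosmetic slip: the sign of the outer swap is that of $c(i)p(s)-c(s)p(i)$, i.e.\ of $c(i)/p(i)-c(s)/p(s)$ (not $c(s)/p(i)p(s)$), which is what your conclusion uses anyway.
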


\begin{proof}
It is straightforward.
\end{proof}
%%%%%

%%%%%%%%%%%%%%%%%%%%%%%%%%%%%%%%%%%%%%%%%%%%%%%%%%%%%%%%%%%%%%%%%%%%%
\section{Result} \label{section:results}
%%%%%%%%%%%%%%%%%%%%%%%%%%%%%%%%%%%%%%%%%%%%%%%%%%%%%%%%%%%%%%%%%%%%%

%%%%%
\begin{theo} \label{theo:main}
Suppose that $T$ is a multi-branching AND-OR tree of height at most 2. 
Suppose that $d$ is an ID on $T$ and for each $i$ $(0 \leq i < r)$ 
and $j$ $(0 \leq j < a(i) )$, we have $p(i,j) \neq 0,1$. 
Then $\mathrm{SOLVE}_{d}$ achieves the minimum cost among all of the algorithms 
(depth-first or non-depth-first, directional or non-directional). 
Therefore, there exists a depth-first directional algorithm that is 
optimal among all of the algorithms.
\end{theo}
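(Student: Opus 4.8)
### Proof Proposal

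The plan is to prove Theorem~\ref{theo:main} by induction on the number of leaves of $T$, reducing the general case to the case of directional algorithms (where Lemma~\ref{lemm:1} already gives the answer). The base case is a tree with two leaves, which is either a binary AND-tree or a binary OR-tree of height~1; here every algorithm is depth-first, and among directional algorithms $\mathrm{SOLVE}_{d}$ is optimal by Lemma~\ref{lemm:1}, so one only needs to check by hand that no non-directional algorithm does better — for two leaves there is essentially nothing to choose, so this is immediate.

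For the inductive step, suppose the result holds for all trees with fewer leaves. Let $A$ be an arbitrary optimal algorithm on $T$. The first key step is to show we may assume $A$ is \emph{directional}, or rather to show directly that $\mathrm{cost}(\mathrm{SOLVE}_{d}, d) \leq \mathrm{cost}(A, d)$. I would proceed by looking at the very first leaf $A$ probes, say $x_{i,j}$. Condition on the two possible outcomes of this probe. In each of the two resulting subtrees (obtained by deleting $x_{i,j}$, and possibly collapsing or deleting its parent $x_i$ when the outcome determines the value of $x_i$), the conditional distribution is again an ID with the no-trivial-probabilities property, and the number of leaves has strictly decreased; so the induction hypothesis applies and tells us that the conditional continuation of $A$ can be replaced, without increasing conditional cost, by $\mathrm{SOLVE}^{T}_{\delta}$ on the appropriate restricted tree. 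Averaging over the two outcomes, we get that $A$ is dominated by an algorithm $A'$ that first probes $x_{i,j}$ and then behaves like the $\mathrm{SOLVE}$-variant on whichever subtree results. The second key step is a local exchange argument, mimicking Tarsi's technique but specialized to height~2: one shows that starting the computation at $x_{i,j}$ and then running $\mathrm{SOLVE}$ cannot beat $\mathrm{SOLVE}_{d}$ itself. This splits into two cases according to whether the priorities in $\mathrm{SOLVE}_{d}$ agree with ``probe $x_{i,j}$ first'' or not. If they do, then $A' = \mathrm{SOLVE}_{d}$ and we are done. If they do not, one compares the cost of $A'$ with the cost of the algorithm that instead probes the $\mathrm{SOLVE}_{d}$-optimal first leaf; the comparison reduces to the single-OR-gate reordering inequality (the formula for $c(i)$ in the Preliminaries) and to the ratio test $c(i)/p(i)$ versus $c(s)/p(s)$ that governs the order of OR-gates. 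These are exactly the inequalities that make $\mathrm{SOLVE}_{d}$ optimal among directional algorithms in Lemma~\ref{lemm:1}, now used to ``bubble'' the first probe into the $\mathrm{SOLVE}_{d}$ position.

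I expect the main obstacle to be the case analysis in the second step when the first leaf $x_{i,j}$ that $A$ probes lies under an OR-gate $x_i$ that does \emph{not} have highest priority, or is not the highest-priority leaf under $x_i$. The subtlety is that after conditioning on the outcome at $x_{i,j}$, the two restricted trees are \emph{different} from each other (one may have $x_i$ removed entirely, the other may have $x_i$ replaced by a smaller OR-gate), so the two $\mathrm{SOLVE}^{T}_{\delta}$ algorithms being averaged do not obviously recombine into a single directional algorithm on $T$; one has to check that the weighted combination still respects a consistent linear order, or argue more carefully with the cost decomposition. Handling the interaction between "which OR-gate to enter" and "which leaf inside that OR-gate to probe" simultaneously is where Tarsi's IID argument used symmetry that is no longer available, and is presumably the place where height~2 is essential and height~3 fails. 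I would isolate this as a separate lemma about two-leaf exchanges within and across OR-gates, prove it by the explicit $c(i)$-formula, and then feed it into the induction.
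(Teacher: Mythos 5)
Your skeleton matches the paper's: induction on the number of leaves, conditioning on the first probed leaf $x_{i,j}$, and invoking the induction hypothesis to replace the two conditional continuations by the $\mathrm{SOLVE}$-variants on $T-x_{i,j}$ and $T-x_i$. But the decisive step is missing, and it is exactly the point you flag as ``the main obstacle'' and leave unresolved. After the induction hypothesis is applied, the two continuations have leaf orders $YXZ$ (if $x_{i,j}=0$) and $YZ$ (if $x_{i,j}=1$), where $X$ consists of the remaining leaves under $x_i$ and $Y$, $Z$ are the OR-gates that $\mathrm{SOLVE}$ ranks before, respectively after, $x_i$. The resulting algorithm is directional (order $x_{i,j},Y,X,Z$) but it is \emph{not} depth-first whenever $Y\neq\emptyset$, since it probes one leaf of $x_i$, leaves $x_i$ unresolved, and wanders into $Y$. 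Your plan to ``bubble the first probe into the $\mathrm{SOLVE}_d$ position'' using the reordering formula for $c(i)$ and the ratio test $c(i)/p(i)$ versus $c(s)/p(s)$ does not address this: those inequalities compare depth-first directional orders with one another (that is Lemma~\ref{lemm:1}), and they say nothing about an algorithm that interleaves a single leaf of $x_i$ ahead of other OR-gates.

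The paper closes this gap not by a direct comparison with $\mathrm{SOLVE}_d$ but by an exchange argument exploiting the assumed optimality of $A$: if $Y\neq\emptyset$ (and $Z\neq\emptyset$; the case $Z=\emptyset$ is symmetric), compare $A$ with the depth-first directional algorithm $B$ of order $Y,x_{i,j},X,Z$. Because the tree has height $2$, the $Y$-segment decides the root if and only if some gate in $Y^{\ast}$ evaluates to $0$, an event of probability $1-p_Y>0$ by the hypothesis $p(i,j)\neq 0,1$; a short computation then gives $\mathrm{cost}(A,d)-\mathrm{cost}(B,d)=1-p_Y>0$, contradicting optimality of $A$. (No ratio test enters; the saving is simply that with probability $1-p_Y$ the probe of $x_{i,j}$ is never paid for.) Hence $Y=\emptyset$, $A$ is already depth-first directional, and Lemma~\ref{lemm:1} finishes the induction step. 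Note also that this height-$2$ fact---``the $Y$-segment terminates the whole evaluation exactly when some gate in $Y^{\ast}$ is $0$''---is precisely what fails at height $3$, which is why the theorem does not extend. Your proposal would be complete if you supplied this exchange lemma (or an equivalent direct inequality for the interleaved order), but as written the recombination problem you yourself identify is left open, so the proof is not yet a proof. A minor further omission: the case $a(i)=1$, where the first probe alone may decide $x_i$, should be treated separately (it is the easy Case 1 of the paper), since there the two restricted trees are the same and the argument simplifies.
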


\begin{proof}
We perform induction on the number of leaves. The base cases are 
the binary AND-trees of height 1 and the binary OR-trees of height 1. 
In general, if $T$ is equivalent to a tree of height 1, 
the assertion of the theorem clearly holds. 

To investigate induction step, we assume that $T$ has at least 
three leaves. Our induction hypothesis is 
that for any multi-branching AND-OR tree $T^{\prime}$ of height at most 2, 
if the number of leaves of $T^{\prime}$ is less than that of $T$ 
then the assertion of theorem holds for $T^{\prime}$. 

We fix an algorithm $A$ that minimizes $\mathrm{cost}(A,d)$ 
among all of the algorithms 
(depth-first or non-depth-first, directional or non-directional). 

Case 1: At the first move of $A$, $A$ makes a query to a leaf $x_{i,0}$ 
such that $a(i)=1$. 
In this case, if $A$ finds that $x_{i,0}$ has value 0 then $A$ returns 0 and finish. 
Otherwise, $A$ calls a certain algorithm (say, $A^{\prime}$) 
on $T - x_{i}$, that is, the tree given by removing 
$x_{i}$ and $x_{i,0}$ from $T$. 
The probability distribution given by restricting $d$ 
to $T - x_{i}$ is an ID, and a probability of any leaf is neither 0 nor 1.

Therefore, by induction hypothesis, without loss of generality, 
we may assume that $A^{\prime}$ is a depth-first directional algorithm 
on $T - x_{i}$. Therefore, $A$ is a depth-first directional algorithm. 
Hence, by Lemma~\ref{lemm:1}, 
the same cost as $A$ is achieved by $\mathrm{SOLVE}_{d}$. 

Case 2: Otherwise. At the first move of $A$, 
$A$ makes a query to a leaf $x_{i,j}$ such that $a(i)\geq 2$. 

Let $T_{0}:=T - x_{i,j}$, the tree given by removing $x_{i,j}$ from $T$. 
In addition, let $T_{1}:=T - x_{i}$, the tree given by removing $x_{i}$ and all of the leaves under $x_{i}$ from $T$. 
Here, $T_{0}$ and $T_{1}$ inherit all of the indices (for example, ``$3,1$'' of $x_{3,1}$) from $T$. 

If $T_{1}$ is empty then $T$ is equivalent to a tree of height 1, and this case 
reduces to our observation in the base case. Thus, throughout rest of Case 2, we assume that $T_{1}$ is non-empty. 

If $A$ finds that $x_{i,j}$ has value 0 then 
$A$ calls a certain algorithm (say, $A_{0}$) on $T_{0}$. 

If $A$ finds that $x_{i,j}$ has value 1 then 
$A$ calls a certain algorithm (say, $A_{1}$) 
on $T_{1}$. 

For each $s=0,1$, let $d[s]$ be the restriction of $d$ to $T_{s}$. 
In the same way as Case 1, without loss of generality, 
we may assume that $A_{s}$ is $\mathrm{SOLVE}^{T}_{d[s]}$ on $T_{s}$. 

Hence, there is a permutation 
$X=\langle x_{i,s(0)}, \dots, x_{i,s(a(i)-2)}\rangle$ 
of the leaves under $x_{i}$ except $x_{i,j}$, 
and there are possibly empty sequences of leaves, 
$Y=\langle y_{0}, \dots, y_{k-1}\rangle$ and 
$Z=\langle z_{0}, \dots, z_{m-1}\rangle$, with the following properties. 

\begin{itemize}
\item The three sets 
$\{ x_{i} \}$, 
$Y^{\ast} = \{ y : y$ is a parent of a leaf in $Y \}$, and 
$Z^{\ast} = \{ z : z$ is a parent of a leaf in $Z \}$ 
are mutually disjoint, 
and their union equals $\{ x_{0}, \dots, x_{r-1} \}$, 
the set of all child nodes of $x_{\lambda}$. 

\item 
The search priority of $A_{0}$ is in accordance with $YXZ$ (thus, $y_{0}$ is the first). 

\item 
The search priority of $A_{1}$ is in accordance with $YZ$. 
\end{itemize}

Case 2.1: $Z$ is non-empty. 

We are going to show that $Y$ is empty. Assume not. Let $B$ be 
the depth-first directional algorithm on $T$ 
whose search priority is $Y~x_{i,j}~XZ$. 

Let $A_{Y}$ ($A_{X}, A_{Z}$, respectively) denote 
the depth-first directional algorithm on the subtree above by $Y$ ($X,Z$), 
where search priority is in accordance with $Y$ ($X,Z$). 
Thus, we may write $A_{0}$ as ``$A_{Y}$; $A_{X}$; $A_{Z}$'', 
$A_{1}$ as ``$A_{Y}$; $A_{Z}$'', and 
$B$ as ``$A_{Y}$; Probe $x_{i,j}$; $A_{X}$; $A_{Z}$''.

We look at the following events. 
Recall that $Y^{\ast}$ is the set of all parent nodes of leaves in $Y$. 
 
$E_{Y}$: ``At least one element of $Y^{\ast}$ has value 0.''

$E_{X}$: ``All of the elements of $X$ have value 0.''

Since the tree is height 2 and $Z$ is non-empty, in each of $A_{0}$, $A_{1}$ and $B$, the following holds: 
``$A_{Y}$ finds value of $x_{\lambda}$ if and only if $E_{Y}$ happens.'' 

In the same way, in $A_{0}$ and in $B$, under assumption that $A_{X}$ is called, 
$A_{X}$ finds value of the root if and only if $E_{X}$ happens. 

Thus, flowcharts of $A$ and $B$ as Boolean decision trees are as described in Figure~\ref{fig:flowchart02} and Figure~\ref{fig:flowchartb1}, respectively. 

%%%%%
\begin{figure}[H]
\centering
\includegraphics[width=.75\textwidth]{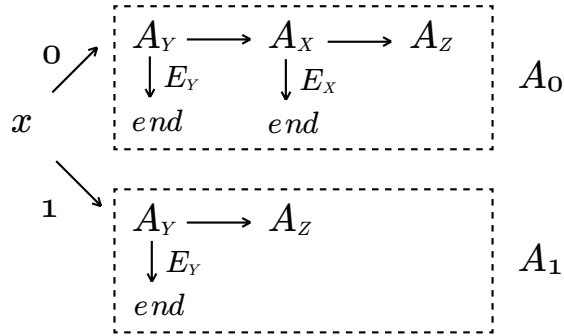}
\caption{Flowchart of $A$ (Case 2.1, in the presence of $Y$)}
\label{fig:flowchart02}
\end{figure}
\begin{figure}[H]
\centering
\includegraphics[width=.54\textwidth]{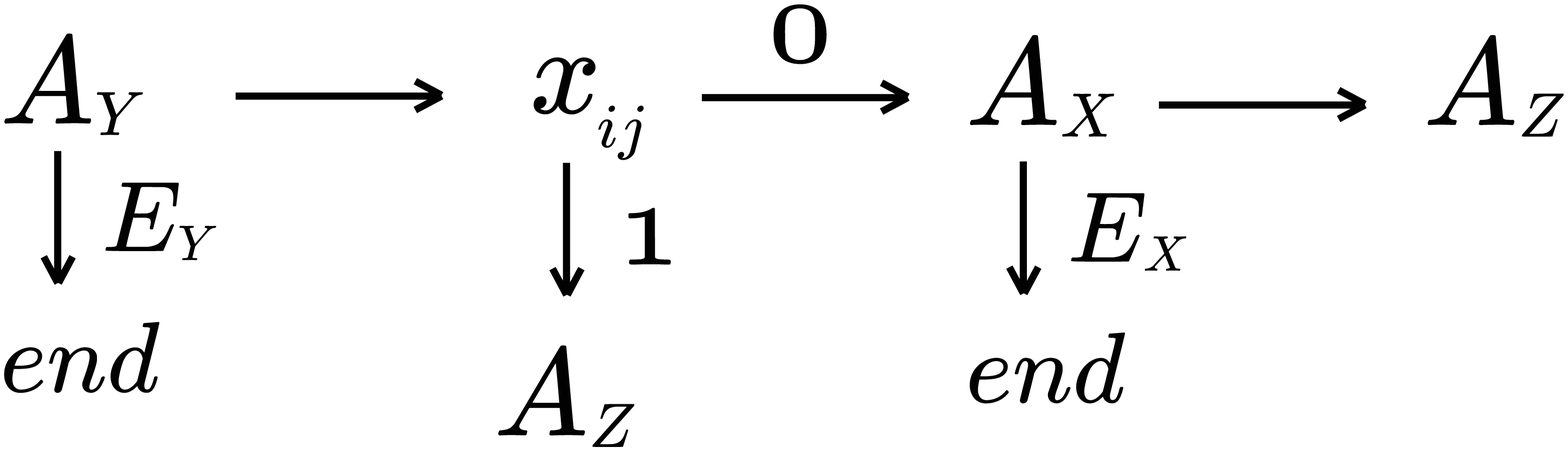}
\caption{Flowchart of $B$ (in the presence of $Y$)}
\label{fig:flowchartb1}
\end{figure}
%%%%%

Therefore, letting $p_{Y} = \mathrm{prob} [\neg E_{Y}] $ (that is, probability of the negation) and 
$p_{X} = \mathrm{prob} [\neg E_{X}]$, 
the cost of $A$ and $B$ are as follows. 
In the following formulas, $C_Y$ denotes 
$\mathrm{cost}(A_{Y}, d_{Y})$, 
where $d_{Y}$ denotes the probability distribution 
given by restricting $d$ to $Y$. 
$C_X$ and $C_Z$ are similarly defined. 

\begin{eqnarray}
& & \mathrm{cost}(A,d) \notag \\
&=& 1 + p(i,j) \{ C_Y + p_{Y} (C_X + p_{X} C_Z) \} + (1-p(i,j)) (C_Y+p_{Y}C_Z) 
\notag \\
&=& 1 + C_Y + p(i,j)p_{Y}(C_X+p_{X}C_Z) + (1-p(i,j))p_{Y}C_Z 
\end{eqnarray}

\begin{eqnarray}
& & \mathrm{cost}(B,d) \notag \\
&=& C_Y + p_{Y} [ 1 + \{ p(i,j)(C_X+p_{X}C_Z) + (1 - p(i,j))C_Z \} ]
\notag \\
&=& C_Y + p_{Y} + p_{Y}p(i,j)(C_X+p_{X}C_Z) + p_{Y}(1-p(i,j))C_Z 
\end{eqnarray}

Therefore, $\mathrm{cost}(A,d) - \mathrm{cost}(B,d) = 1 - p_{Y}$. 
However, by our assumption on $d$ that probability of each leaf (having value 0) 
is neither 0 nor 1, $E_{Y}$ has positive probability, thus $p_{Y} < 1$. 
Thus $\mathrm{cost}(A,d) - \mathrm{cost}(B,d)$ is positive, 
and this contradicts to the assumption that $A$ achieves the minimum cost. 

Hence, we have shown that $Y$ is empty. Therefore, $A$ is the following algorithm 
(Figure~\ref{fig:flowchart12}): 
``Probe $x_{i,j}$. If $x_{i,j}=0$ then perform depth-first directional search 
on $T_{0}=T - x_{i,j}$, where search priority is in accordance with $XZ$. 
Otherwise (that is, $x_{i,j}=1$), perform depth-first directional search 
on $T_{1}=T - x_{i}$, where search priority is in accordance with $Z$.''

%%%%%
\begin{figure}[H]
\centering
\includegraphics[width=.6\textwidth]{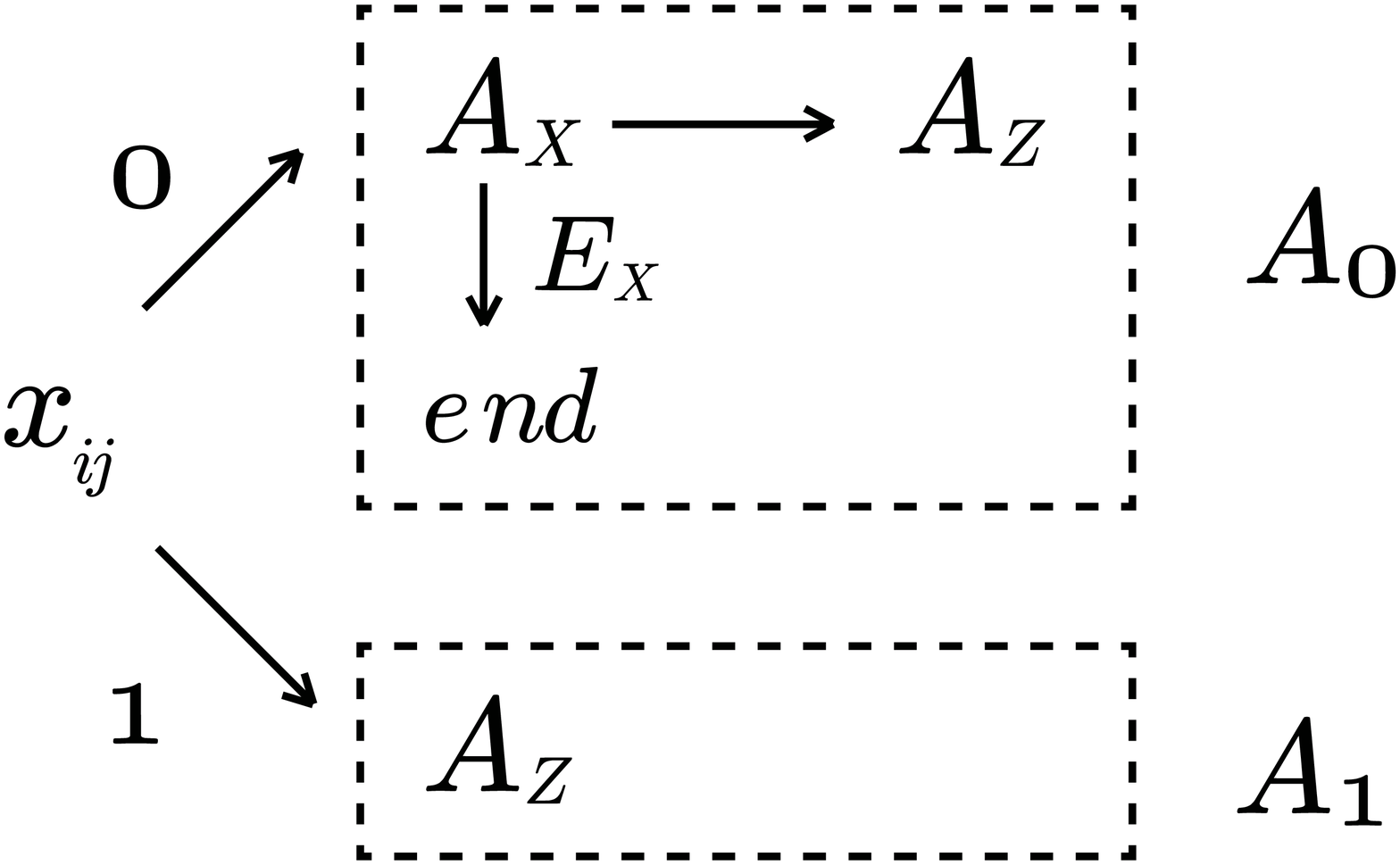}
\caption{Flowchart of $A$ (in the absence of $Y$)}
\label{fig:flowchart12}
\end{figure}
%%%%%

Thus, $A$ is a depth-first directional algorithm. 
Hence, by Lemma~\ref{lemm:1}, 
the same cost as $A$ is achieved by $\mathrm{SOLVE}_{d}$. 

Case 2.2: Otherwise. In this case, $Z$ is empty. The proof is similar to 
Case 2.1.
\end{proof}
%%%%%

%%%%%%%%%%%%%%%%%%%%%%%%%%%%%%%%%%%%%%%%%%%%%%%%%%%%%%%%%%%%%%%%%%%%%
\section{Concluding remarks} \label{section:summary}
%%%%%%%%%%%%%%%%%%%%%%%%%%%%%%%%%%%%%%%%%%%%%%%%%%%%%%%%%%%%%%%%%%%%%

%%%%%%%%%%%%%%%%%%%%%%%%%%%%%%%%%%%%%%%%%%%%%%%%%%%%%%%%%%%%%%%%%%%%%
\subsection{Difference between height 2 case and heihgt 3 case}
%%%%%%%%%%%%%%%%%%%%%%%%%%%%%%%%%%%%%%%%%%%%%%%%%%%%%%%%%%%%%%%%%%%%%

As is mentioned in Introduction, the counterpart to Theorem~\ref{theo:main} 
does not hold for the case of height 3. 
We are going to discuss why the proof of Theorem~\ref{theo:main} does not 
work for the case of height 3. 

Figure~\ref{fig:orandtreeh3n1} is a complete binary OR-AND tree of height 3. 

%%%%%
\begin{figure}[H]
\centering
\includegraphics[width=.7\textwidth]{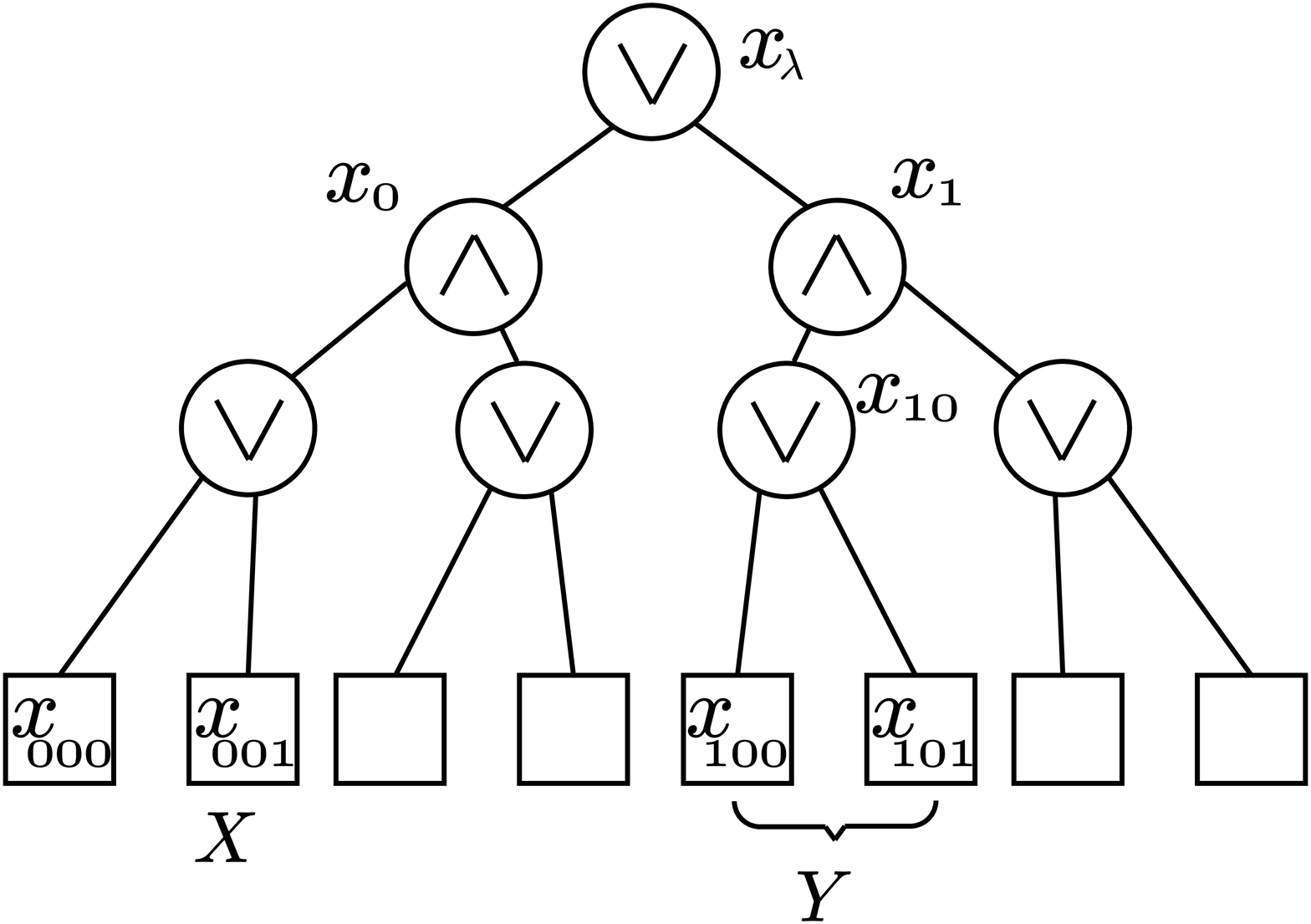}
\caption{A binary OR-AND tree of height 3}
\label{fig:orandtreeh3n1}
\end{figure}
%%%%%

Suppose that algorithm $A$ is as follows. 
Let $Y=\langle x_{100}, x_{101} \rangle$, $X=\langle x_{001} \rangle$, and 
$Z=\langle x_{010}, x_{011}, x_{110}, x_{111}\rangle$. 
At the first move, $A$ probes $x_{000}$. 
If $x_{000} = 0$ then $A$ probes in accordance with order $YXZ$. 
If $x_{000} = 1$ then $A$ probes in accordance with order $YZ$. 

Let $A^{\prime}_{Y}$ be the algorithm on $Y$ such that $x_{100}$ has higher priority 
of probing than $x_{101}$. 

Suppose that an ID $d$ on the tree is given, and that, at each leaf, 
probability of having value 0 is neither 0 nor 1. 
We investigate the following event. 

$E^{\prime}_{Y}$: ``$x_{10}$ has value 0.''

On the one hand, by our assumption on ID $d$, $E^{\prime}_{Y}$ has positive probability. 
On the other hand, whether $E^{\prime}_{Y}$ happens or not, $A_{Y}$ does not find value of $x_{\lambda}$. In other words, probability of ``$A^{\prime}_{Y}$ finds value of $x_{\lambda}$'' is 0. 
Therefore, $E^{\prime}_{Y}$ is not equivalent to the assertion 
``$A^{\prime}_{Y}$ finds value of $x_{\lambda}$''. 
Hence, counterpart to our observation in Case 2.1 of Theorem~\ref{theo:main} 
does not work for the present setting.

%%%%%%%%%%%%%%%%%%%%%%%%%%%%%%%%%%%%%%%%%%%%%%%%%%%%%%%%%%%%%%%%%%%%%
\subsection{Summary and future directions}
%%%%%%%%%%%%%%%%%%%%%%%%%%%%%%%%%%%%%%%%%%%%%%%%%%%%%%%%%%%%%%%%%%%%%

Given a tree $T$, let $\mathrm{IID}_{T}^{+}$ ($\mathrm{ID}_{T}^{+}$, respectively) 
denote the set of all IIDs on $T$ (IDs on $T$) such that, at each leaf, 
probability having value 0 is neither 0 nor 1.
Now we know the following. 

\begin{enumerate}
\item (Tarsi \cite{Ta83}) Suppose that $T$ is a balanced AND-OR tree of any height, 
and that $d \in \mathrm{IID}_{T}^{+}$. 
Then there exists an optimal algorithm that is depth-first and directional.  

\item (S. \cite{Su17b}) Suppose that $T$ is a complete binary OR-AND tree of height 3. 
Then there exists $d \in \mathrm{ID}_{T}^{+}$ such that 
no optimal algorithm is depth-first. 
 
\item (Theorem 2) Suppose that $T$ is an AND-OR tree of height 2, 
and that $d \in \mathrm{ID}_{T}^{+}$. 
Then there exists an optimal algorithm that is depth-first and directional. 
\end{enumerate}

Suppose that $T$ is a complete binary AND-OR tree of height $h \geq 3$. 
There is yet some hope to find a subset $\mathcal{D}$ of $\mathrm{ID}_{T}^{+}$ 
of the following properties. 

\begin{itemize}
\item $\mathrm{IID}_{T}^{+} \subsetneq \mathcal{D} \subsetneq \mathrm{ID}_{T}^{+}$
\item For each $d \in \mathcal{D}$, there exists an optimal algorithm that is depth-first and directional. 
\end{itemize}

%%%%%%%%%%%%%%%%%%%%%%%%%%%%%%%%%%%%%%%%%%%%%%%%%%%%%%%%%%%%%%%%%%%%%
%\section*{Acknowledgements}
%%%%%%%%%%%%%%%%%%%%%%%%%%%%%%%%%%%%%%%%%%%%%%%%%%%%%%%%%%%%%%%%%%%%%

%%%%%%%%%%%%%%%%%%%%
\end{document}